\definecolor{darkgreen}{rgb}{0,0.6,0}
\newtheorem{problem}{Problem}
\newtheorem{theorem}{Theorem}
\definecolor{note}{rgb}{0.1,0.1,1}
\definecolor{rephase}{rgb}{0.15,0.7,0.15}
\definecolor{bag}{rgb}{0.6,0.6,0.2}
\renewcommand*\env@matrix[1][c]{\hskip -\arraycolsep
  \let\@ifnextchar\new@ifnextchar
  \array{*\c@MaxMatrixCols #1}}
\newcommand{\transpose}{\mathsf{T}}
\newcommand{\mathleft}{\@fleqntrue\@mathmargin0pt}
\newcommand{\mathcenter}{\@fleqnfalse}
\definecolor{orange}{RGB}{255,127,0}
\title{\LARGE \bf A Generalized Metriplectic System via Free Energy and System~Identification via Bilevel Convex Optimization}
\author{Sangli Teng, Kaito Iwasaki, William Clark, Xihang Yu, Anthony Bloch, Ram Vasudevan, Maani Ghaffari
\thanks{S. Teng, K. Iwasaki, W. A. Bloch, R. Vasudevan and M. Ghaffari are with the University of Michigan, Ann Arbor, MI 48109, USA. \texttt{\{sanglit, kaitoi, abloch, ramv, maanigj\}@umich.edu}. W. Clark is with Ohio University, Athens, OH 45701. \texttt{clarkw3@ohio.edu}. X. Yu is with MIT. \texttt{jimmyyu@mit.edu} }
}
\begin{document}

\maketitle
\thispagestyle{empty}
\pagestyle{empty}

\begin{abstract}
This work generalizes the classical metriplectic formalism to model Hamiltonian systems with nonconservative dissipation. Classical metriplectic representations allow for the description of energy conservation and production of entropy via a suitable selection of an entropy function and a bilinear symmetric metric. By relaxing the Casimir invariance requirement of the entropy function, this paper shows that the generalized formalism induces the \emph{free energy} analogous to thermodynamics. The monotonic change of \emph{free energy} can serve as a more precise criterion than mechanical energy or entropy alone. This paper provides examples of the generalized metriplectic system in a 2-dimensional Hamiltonian system and $\mathrm{SO}(3)$. This paper also provides a bilevel convex optimization approach for the identification of the metriplectic system given measurements of the system. 

\end{abstract} 

\IEEEpeerreviewmaketitle

\section{Introduction}
A metriplectic system combines the structure of the Poisson bracket and a bilinear symmetric bracket in 
a suitable fashion. The classical formalism of the metriplectic system induces the first and second laws of thermodynamics, i.e., the conservation of energy and production of entropy. To achieve energy conservation, entropy is chosen as a Casimir function, whose flow vanishes with respect to the Poisson bracket. The bilinear symmetric bracket is constructed to ensure the gradient flow of Hamiltonian vanishes while the entropy increases monotonically. 

However, this special entropy construction and the bilinear bracket cannot model general complex dissipation, such as rigid body motion in fluids \cite{o2022neural, song2023uncertainty, song2024turtlmap}. This work introduces a generalized metriplectic system by relaxation of the classical formalism in both entropy construction and bracket. We show that the generalized system induces the \emph{free energy}, whose counterpart in thermodynamics is a more precise criterion than the internal energy or the entropy alone. 

In particular, the main contribution of this work can be summarized as follows.
\begin{enumerate}
    \item Introduction of a generalized metriplectic system and induced free energy via the relaxation of the classical formalism. 
    \item A system identification approach that identifies the metric and entropy simultaneously via bilevel convex programming.
    \item Contructive examples of the proposed metriplectic system for a 2D system and a system evolving on $\mathrm{SO}(3)$.
    \item The software for reproducing the presented results can be downloaded at \href{https://github.com/UMich-CURLY/A_Generalized_Metriplectic_System.git}{link}. 
\end{enumerate}


 \begin{figure}[t]
    \centering
    \includegraphics[width=1\linewidth]{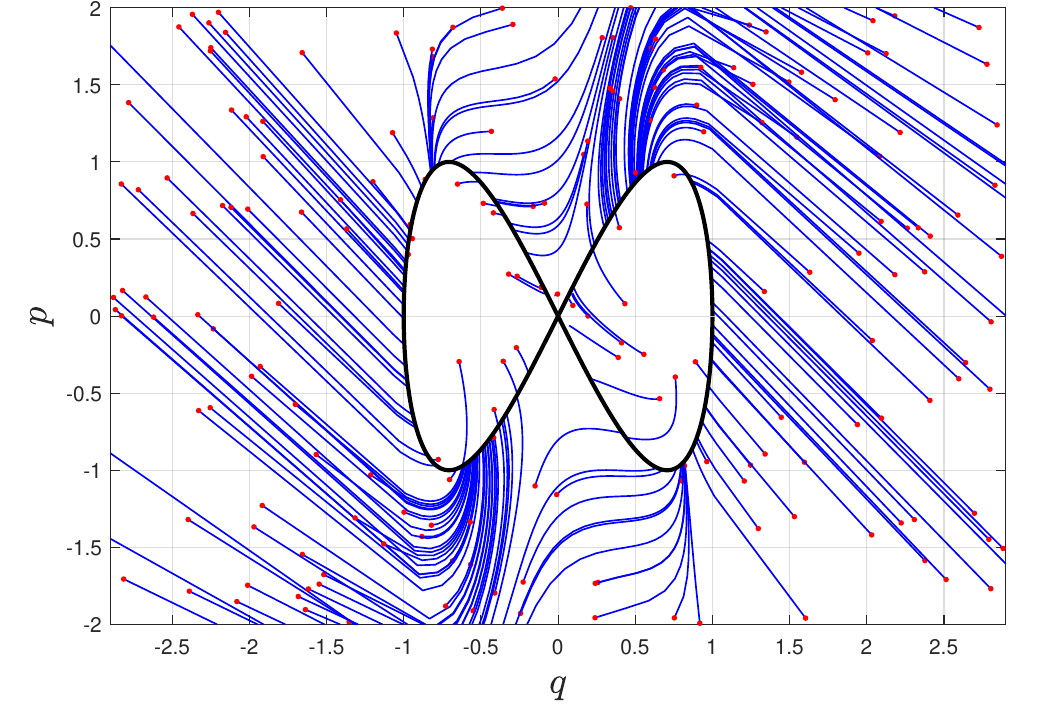}
    \caption{Trajectory of the 2D metriplectic system in the phase space. The trajectories converge to the zero-level set of the free energy ($\infty$ shape) via a selected entropy function and metric. The red dots denote the randomly sampled initial conditions.}
    \label{fig:linear-traj}
\end{figure}

\section{Related work}
The metriplectic formalism provides a natural framework for combining Hamiltonian dynamics, represented by the Poisson bracket, with dissipative dynamics, represented by a symmetric bilinear bracket. One motivation in the development of metriplectic formalism is the unification of dissipation into well-developed Hamiltonian systems, allowing for modeling systems that exhibit both energy conservation and irreversible processes. Early work in geometric Hamiltonian mechanics, such as Souriau \cite{souriau1970dynamiques} and Abraham and Marsden \cite{AbrahamMarsden}, provide the foundation for much of the later developments in this field. On the other hand, the study of dissipation in Hamiltonian systems has received less attention, largely due to the broader and more varied nature of dissipative phenomena. 

Several efforts have been made to formulate several classes of dissipative systems in the context of gradient flows. The double-bracket formalism, originally introduced by Brockett \cite{BROCKETT199179}, see also\cite{bloch1990steepest} and \cite{bloch1992completely}.
This was further extended to the mechanical setting by Bloch et al. \cite{BLOCH199437,Bloch1996}, which investigates a special type of gradient flow on adjoint orbits of Lie groups, where the dissipation is driven by a specific metric structure. In an infinite dimensional setting, diffusion-like PDEs are extensively studied to model the dissipative nature of complex physical systems \cite{Otto2001, Ambrosio2008,Villani2009,Jungel2016}. 

The metriplectic formalism, originally introduced by Morrison \cite{MORRISON1986410}, is another class of such dissipative systems that incorporates gradient flows along Hamiltonian dynamics. The term ``metriplectic" reflects the combination of a symplectic structure, which governs the conservative Hamiltonian part, with a metric structure that introduces dissipation. In classical metriplectic systems, Casimir invariants represent entropy, ensuring the conservation of the Hamiltonian while allowing for entropy production through dissipation. This approach enables the modeling of systems that simultaneously respect the first and second laws of thermodynamics, capturing both energy conservation and irreversible processes.

Applications of metriplectic systems have been explored in several contexts. In mechanical systems, metriplectic dissipation can be used for motion control of rigid bodies \cite{materassi2018metriplectic}. For Lie groups, metriplectic dissipation is considered a forcing term in the Lie-Poisson equation \cite{bloch2024metriplecticeulerpoincare}. From a control-theoretic perspective, the stabilization of metriplectic systems was studied in \cite{Birtea_2007} using an invariance principle. Also, as a nonlinear affine control problem, feedback stabilization of metriplectic systems was studied in \cite{HUDON201312}, invoking the Hodge decomposition of Brayton–Moser systems \cite{GUAY201266}.

Initially developed in the context of plasma physics, the formalism has since been applied to a wide range of physical systems. In nonequilibrium thermodynamics, the GENERIC formulation for thermodynamics, introduced by Grmela and Öttinger \cite{GENERIC1997GrmelaOttinger}, builds upon the foundational concepts of metriplectic systems initially developed by Kaufman \cite{KAUFMAN1984419} and Morrison \cite{MORRISON1986410}. This generalization led to applications in a wide range of disciplines. In plasma physics, the collisional effect in the Vlasov-Poisson system was studied in \cite{morrison1984brackets}, and dissipative visco-resistive magneto-hydrodynamics was studied in \cite{Materassi_2012}. In fluid dynamics, the compressible Navier-Stokes equations have been identified as metriplectic systems \cite{morrison1984brackets}, and the formalism has been extended to encompass general complex fluids \cite{GENERIC1997GrmelaOttinger}. More recent works include Korteweg-type fluids \cite{Suzuki_2020} and the Smoluchowski equation \cite{WAGNER2001177}. Applications have also been seen beyond plasma and fluids. For example, in statistical mechanics, the GENERIC formalism of Vlasov–Fokker–Planck
equation was investigated in \cite{Duong_2013}. 

In parallel to these developments, a more general formalism of metriplectic systems has been developed using triple-bracket structures \cite{bloch2013gradient}, encompassing both finite and infinite-dimensional systems. It considers a wide class of gradient flows arising from the normal metric on adjoint orbits of a Lie group and the K\"{a}hler metric. It uses triple-bracket structures to successfully generate wide classes of dissipative systems, including metriplectic $\mathfrak{so}(3)$ system and KdV equation with dissipation. More recently, a curvaturelike framework, called metriplectic 4-bracket structure, formulating Riemannian-like metriplectic geometry was studied in \cite{curvature4bracket2024Morrison}.

While the theoretical framework of metriplectic systems has been well-explored, computational developments in this area remain scarce, though there are a few developments worth mentioning. Based on discrete gradient methods, structure-preserving numerical integrators were developed in \cite{Ottinger2018}, \cite{shang2020structure}, and \cite{energymomentymnumerics2020}. A particular application of these methods to Riemannian-structure induced dissipation was discussed in \cite{bloch2024metriplecticeulerpoincare}. In a different line of research, metriplectic integrators through neural networks are studied in \cite{NEURIPS2021_2d1bcedd} and \cite{Zhang_2022}. Lastly, the computationally efficient reduced-order model, based on proper orthogonal decomposition, adapted to metriplectic systems, was developed in \cite{GRUBER2023115709}. 



\section{Preliminaries}
\label{prelim}

Metriplectic systems generalize the dynamics generated by the Poisson bracket with an additional symmetry bracket that represents dissipative effects. The metriplectic formalism provides a useful formulation to link a dynamical system to its environment. 

\subsection{Poisson structure}

Consider a differentiable manifold $P$ equipped with a Poisson structure given by a Poisson (bilinear) bracket:
\begin{equation}
    \begin{array}{clc}
C^{\infty}(P) \times C^{\infty}(P) & \longrightarrow & C^{\infty}(P), \\
(f, g) & \longmapsto & \{f, g\},
\end{array}
\end{equation}
satisfying the following properties:
\begin{enumerate}
    \item Skew-symmetry, $\{g, f\}=-\{f, g\}$;
    \item Leibniz rule, $\{f g, h\}=f\{g, h\}+g\{f, h\}$;
    \item  Jacobi identity, $$\{\{f, g\}, h\}+\{\{h, f\}, g\}+\{\{g, h\}, f\}=0;$$for all $f, g, h \in C^{\infty}(P)$.
\end{enumerate}

Given a Poisson manifold with bracket $\{\cdot,\cdot\}$ and a function $f \in C^{\infty}(P)$ we may associate with $f$ a unique vector field $X_f \in \mathfrak{X}(P)$, the Hamiltonian vector field defined by $X_f(g)=\{g, f\}$.
Moreover, on a Poisson manifold, there exists a unique bivector field $\Pi$, a Poisson bivector (that is, a twice contravariant skew symmetric differentiable tensor) such that
\begin{equation}
    \{f, g\}:=\Pi(d f, d g), \quad f, g \in C^{\infty}(P).
\end{equation}

\subsection{Positive Semi-Definite Inner Product}
Assume that for each point $x \in P$ we have a positive semidefinite inner product for covectors
\begin{equation}
    \mathcal{K}_x: T_x^* P \times T_x^* P \rightarrow \mathbb{R},    
\end{equation}
from which we can define $\sharp_{\mathcal{K}}: T^* P \rightarrow T P$ by
\begin{equation}
    \sharp \mathcal{K}\left(\alpha_x\right)=\mathcal{K}_x\left(\alpha_x, \cdot\right),    
\end{equation}
and the gradient vector field
\begin{equation}
    \operatorname{grad}^{\mathcal{K}} S=\sharp \mathcal{K}(d S),
\end{equation}
for any function $S: P \rightarrow \mathbb{R}$.
$\mathcal{K}$ defines a symmetric bracket given by
\begin{equation}
    (d f, d g)=\mathcal{K}(d f, d g).
\end{equation}




\subsection{Metriplectic System}

A metriplectic system consists of a smooth manifold $P$, two smooth vector bundle maps $\sharp_{\Pi}, \sharp \mathcal{K}: T^* P \rightarrow T P$ covering the identity, and two functions $H, S \in C^{\infty}(P)$ called the Hamiltonian (or total energy) and the entropy of the system, such that for all $f, g \in$ $C^{\infty}(P):$
\begin{enumerate}
    \item $\{f, g\}=\left\langle d f, \sharp_{\Pi}(d g)\right\rangle$ is a Poisson bracket ( $\Pi$ denotes the Poisson bivector).
    \item $(f, g)=\left\langle d f, \sharp_{\mathcal{K}}(d g)\right\rangle$ is a positive semidefinite symmetric bracket, i.e., $(\cdot, \cdot)$ is bilinear and symmetric.
    \item $\sharp_{\mathcal{K}}(d H)=0$ or equivalently $(H, f)=0, \forall f \in C^{\infty}(P)$.
    \item $\sharp_{\Pi}(d S)=0$ or equivalently $\{S, f\}=0, \forall f \in C^{\infty}(P)$, that is, $S$ is a Casimir function for the Poisson bracket.
\end{enumerate}

Consider the function $E=H+S: P \rightarrow \mathbb{R}$. Then the dynamics of the metriplectic system is determined by

\begin{equation}
    \begin{aligned}
\frac{d x}{d t} & =\sharp_{\Pi}(d E(x(t)))+\sharp_{\mathcal{K}}(d E(x(t))) \\
& =\sharp_{\Pi}(d H(x(t)))+\sharp_{\mathcal{K}}(d S(x(t))) \\
& =X_H(x(t))+\operatorname{grad}^{\mathcal{K}} S(x(t)),
\end{aligned}
\end{equation}

where $X_H=\sharp_{\Pi}(d H)$ and $\operatorname{grad}^{\mathcal{K}} S=\sharp \mathcal{K}(d S)$. From the equations of motion it is simple to deduce the following:
\begin{enumerate}
    \item First law: conservation of energy, $\frac{d H}{d t}=\{H, H\}+(H, S)=0$
    \item Second law: Entropy production, $\frac{d S}{d t}=(S, S) \geq 0$.
\end{enumerate}

Thus, metriplectic dynamics embodies both the first and second laws of thermodynamics. Denote parameterize the metric $\mathcal{K}$ as matrix $K$, we have the system in matrix form:
$$
\dot{x}=\Pi \nabla H+K \nabla S. 
$$



\section{Generalized Metriplectic System}
In this section, we relax the conditions in the previous section to model dissipation that unnecessarily conserve the energy, i.e, the case that the entropy is not a Casimir. 

The condition $\left\{S, f\right\} = 0, \forall f \in C^{\infty}(P)$ constraints that entropy is a Casimir function. The constraint on the metric $\left( H, f\right)= 0, \forall f \in C^{\infty}(P)$ is also chosen to ensure the energy is not dissipative.  However, modeling the entropy as Casimir makes it hard to account for complex dissipation that do not conserve energy. 

As a generalization, we drop these conditions and we have the generalized metriplecic system:
\begin{equation}
    \begin{aligned}
    \frac{d x}{d t} & =\sharp_{\Pi}(d E(x(t)))+\sharp_{\mathcal{K}}(d E(x(t))) \\
    \end{aligned}
\end{equation}
Letting $x = H$, we have:
\begin{equation}
\begin{aligned}
    \dot{H} &= \left\{dH, dH + dS\right\} +( dH, dH + dS) \\
    &= \left\{dH, dS\right\}+( dH, dH + dS),
\end{aligned}
\end{equation}
which does not necessarily equal $0$, thus making it possible to dissipate energy. Then we let $x = S$, which yields:
\begin{equation}
\begin{aligned}
    \dot{S} &= \left\{dS, dH + dS\right\} + ( dS, dH + dS) \\
    &= \left\{dS, dH\right\}+ ( dS, dH + dS). 
\end{aligned}
\end{equation}
To sum them up, we have:
\begin{equation}
\begin{aligned}
    \dot{E} &= \dot{H} + \dot{S} \\
    &=\left\{dH + dS, dH + dS\right\} + (dH + dS, dH + dS) \\
    &= (dE, dE) \ge 0. 
\end{aligned}
\end{equation}
Thus, we recover the free energy criterion for a thermal dynamics process. In this modeling, the dynamics will be such as to ensure the monotone change of free energy, which can be an analogy of the Gibbs free energy as a criterion for chemical reactions.

\section{System identification}
In this section, we introduce the method to identify the inner product and entropy via convex optimization, where we assume the Poisson bracket is known in advance.


\subsection{Problem formulation}
Consider a set of the measurement of the trajectory $\left\{ x_1, x_2, x_3, \cdots, x_m\right\}$ and the associated time derivative $\left\{\dot{x}_1, \dot{x}_2, \dot{x}_3, \cdots, \dot{x}_m \right\}$. Then our goal is to identify the metric $\mathcal{K}$ and the entropy $S$ to generate the systems that best matches the measurements:

\begin{problem}[Identifying the metricplectic system]
Suppose we have the measurements of system trajectory $\left\{ x_1, x_2, x_3, \cdots, x_m\right\}$ and its time derivative $\left\{\dot{x}_1, \dot{x}_2, \dot{x}_3, \cdots, \dot{x}_m \right\}$. We consider a cost function $r(S, \mathcal{K}  | \dot{x}_i, x_i ) \ge 0 $ and require that $r(S, \mathcal{K} | \dot{x}_i, x_i) = 0 $ when $\dot{x}_i = \left(\Pi(x_i) + K(x_i)\right)\left(\nabla H(x_i) + \nabla S(x_i)\right)$. Thus we have the optimization:

\begin{equation}\label{eq:cost}
     \min_{\mathcal{K}, S}\quad \sum^{m}_{i = 1} r(S, \mathcal{K} | \dot{x}_i, x_i)
\end{equation}
\end{problem}

We note that this optimization is defined for all functions $S$ and metric $\mathcal{K}$ that is challenging for numerical implementation in finte dimensional space.

\subsection{Systems identification via convex optimization}

\begin{algorithm}[t]
\caption{Bilevel convex optimization}
\label{alg:1}
\begin{algorithmic}
\footnotesize
\Require Measurement of system trajectory $\left\{x_1, x_2, x_3, \cdots, x_m\right\}$ and corresponding acceleration $\left\{\dot{x}_1, \dot{x}_2, \dot{x}_3, \cdots, \dot{x}_m\right\}$. Initial entropy $S^{(0)}$ parameterized by $\theta^{(0)}$. 
\For{iteration $t = 0, 1, \ldots$}
    \State {\texttt{// Metric search}}
    \State $\mathcal{K}^{(t)}, \theta^{(t)} \xleftarrow{\mathrm{LMI}}$ Problem \ref{prob:metric-search} with fixed $S = S^{(t)}$. 

    \State {\texttt{// Entropy search}}
    \State $\mathcal{S}^{(t)}, \psi^{(t)} \xleftarrow{\mathrm{LP}}$ Problem \ref{prob:entropy-search} with fixed $\mathcal{K} = \mathcal{K}^{(t)}$. 
\EndFor
\State {
\noindent \Return $\{ \mathcal{K}^{(t)}, S^{(t)} \}$
}
\end{algorithmic}
\end{algorithm}

\begin{algorithm}[t]
\caption{Stochastic bilevel convex optimiztion}
\label{alg:2}
\begin{algorithmic}
\footnotesize
\Require Measurement of system trajectory $\left\{x_1, x_2, x_3, \cdots, x_m\right\}$ and corresponding acceleration $\left\{\dot{x}_1, \dot{x}_2, \dot{x}_3, \cdots, \dot{x}_m\right\}$. Initial entropy $S^{(0)}$ parameterized by $\theta^{(0)}$.

\For{iteration $t = 0, 1, \ldots$}
    \State Randomly select mini batch for the trajectory $\left\{x_1, x_2, x_3, \cdots, x_q\right\}$ and acceleration $\left\{\dot{x}_1, \dot{x}_2, \dot{x}_3, \cdots, \dot{x}_q\right\}$ of size $q \le m$. 
    \State {\texttt{// Metric search}}
    \State $\mathcal{K}^{(t)}, \theta^{(t)} \xleftarrow{\mathrm{LMI}}$ Problem \ref{prob:metric-search} with fixed $S = S^{(t)}$. 

    \State {\texttt{// Entropy search}}
    \State $\mathcal{S}^{(t)}, \psi^{(t)} \xleftarrow{\mathrm{LP}}$ Problem \ref{prob:entropy-search} with fixed $\mathcal{K} = \mathcal{K}^{(t)}$. 
\EndFor
\State {
\noindent \Return $\{ \mathcal{K}^{(t)}, S^{(t)} \}$
}
\end{algorithmic}
\end{algorithm}

To approximate the $S$ and $\mathcal{K}$ in finite dimensional space, we formulate $S$ and $\mathcal{K}$ as polynomial function. Consider the polynomial basis of an $n$ dimensional system $x \in \mathbb{R}^n $ with order up to $r$:\
\begin{equation}
    \phi_r(x) = \left[1, x_1, x_2, \cdots, x_1^r, x_2^r, \cdots, x_n^r\right].
\end{equation}
We model the entropy as a polynomial function
\begin{equation}
    S(x) = \langle \psi, \phi_r(x)\rangle,
\end{equation}
and $K$ as a element-wise polynomial matrix:
\begin{equation}
    {K}(x)_{i, j} = \langle \theta_{i, j}, \phi_s(x) \rangle,
\end{equation}
with the vector of coefficients $\psi$ and $\theta$. 

We consider the cost function to the one-norm of the residual of the fitted metriplectic field, and we yield the following component-wise convex optimization for the system identification problem: 
\begin{problem}[Component-wise convex optimization]
\label{prob:bilevel-cvx}
    Consider that the metric $\mathcal{K}$ parameterized by $\theta = \left\{ \theta_{i, j} \right\}$ and the entropy $S$ parameterized by $\psi$. We minimize the one-norm of the discrepancy between the measured time derivative and the fitted metriplectic system:
    \begin{equation}
        \begin{aligned}
        \min_{ \psi, \theta }&\quad \sum_{i = 1}^m  \left|\dot{x}_i - \Pi(x_i)\left(\frac{\partial H + S}{\partial x}\right) - {K}(x_i)\left(\frac{\partial H + S}{\partial x}\right) \right|_1 \\
        \mathrm{s.t.}&\quad \ \ \eta I \succeq {K} \succeq 0, \\
                     &\quad -\eta \le \psi_k \le \eta, \quad \forall k,
        \end{aligned}
    \end{equation}
where $\eta$ is a real number that is sufficiently large to ensure the feasible set is compact and enable the globally optimal solutions to stay in the feasible set.  
\end{problem}

We note that Problem \ref{prob:bilevel-cvx} has convex feasible set, while the cost function is nonconvex as it involves only cross terms of the coefficients. However, we note that the problem is component-wise convex when either $\mathcal{K}$ or $S$ is fixed. To see this, when either of $\psi$ or $\theta$ is fixed, the cost function reduces to the one-norm of the linear function of the free variables and the feasible. Thus, we propose the following bilevel convex optimization algorithm that optimize the metric $\mathcal{K}$ and entropy $S$. 

For fixed entropy $S$, we note that the problem can be solved by the following convex optimization problem with Linear Matrix Inequality (LMI). Additional slack variables are also introduced to convert the non-smooth one-norm objective to an equivalent linear objective:
\begin{problem}[Metric search]
\label{prob:metric-search}
    Given fixed entropy $S = \bar{S} = \langle \bar{\psi}, \phi_r(x) \rangle$, we search the parameters for the metric via:
    \begin{equation}
    \begin{aligned}
        \min_{\theta, \delta} \quad &\sum_{i=1}^m \sum_{j=1}^n \delta_{i, j} \\ 
        \mathrm{s.t.}\quad & \eta I \succeq {K}(x) \succeq 0, \\ 
        & \delta_i \ge \dot{x}_i - \left(\Pi(x_i) + {K}(x_i)\right)\left(\frac{\partial H + \bar{S}}{\partial x}\right), \\
        - &\delta_i \le \dot{x}_i -\left(\Pi(x_i) + {K}(x_i)\right)\left(\frac{\partial H + \bar{S}}{\partial x}\right), \\
        &\delta_{i, j} \ge 0. \\
    \end{aligned}
    \end{equation}
    As the constraint $\eta I \succeq {K}(x) \succeq 0$ is imposed on all $x$, we introduce additional indeterminate variable $y$ to slack it as a sufficient sum of square (SOS) polynomial condition \cite{manchester2017control}:
    \begin{equation}
    \begin{aligned}
        y^{\transpose}{K}(x)y &\in SOS \Rightarrow y^{\transpose}{K}(x)y \ge 0 \Rightarrow {K}(x) \succeq 0, \\
        y^{\transpose}(\eta I - {K}(x))y &\in SOS \Rightarrow y^{\transpose}(\eta I - {K}(x))y \ge 0  \\
        &\Rightarrow \eta I \succeq {K}(x).
    \end{aligned}
    \end{equation}
\end{problem}
For fixed metric, we introduce the following Linear Programming (LP) to search the entropy:

\begin{problem}[Entropy search]
\label{prob:entropy-search}
Given fixed $\mathcal{K} = \bar{\mathcal{K}}$ defined by the component-wise polynomial matrix $\bar{{K}}(x)_{i, j} = \langle\bar{\theta}_{i, j}, \phi_s(x) \rangle $, we search the parameters for the entropy via:
\begin{equation}
    \begin{aligned}
        \min_{\psi, \delta} \quad &\sum_{i=1}^m \sum_{j = 1}^N \delta_{i, j} \\
        \mathrm{s.t.}\quad& \delta_i \ge \dot{x}_i - \left(\Pi(x_i) + \bar{{K}}(x_i)\right)\left(\frac{\partial H + S}{\partial x}\right), \\
        - &\delta_i \le \dot{x}_i - \left(\Pi(x_i) + \bar{{K}}(x_i)\right)\left(\frac{\partial H + S}{\partial x}\right), \\
        &\delta_{i, j} \ge 0, \\
        -&\eta \le \psi_k \le \eta, \quad \forall k \\
    \end{aligned}
\end{equation}
\end{problem}
Given these subproblems, we summarize the bilevel convex optimization problem in Algorithm \ref{alg:1}. We note that although the Problem \ref{prob:bilevel-cvx} is nonconvex though component-wise convex. Thus, the proposed bilevel optimization is not guaranteed to converge to the global optimum. To handle the nonconvexity, we introduce the stochastic version summarized in Algorithm \ref{alg:2}, that uses mini batch in each iterations to bring random search that makes it possible to escape the local minimum. The mini batch can also avoid large LP or LMI that bring challenges to the numerical solvers.

Though convergence to global optimum is not guaranteed, we show that total cost of the algorithm is guaranteed to converge, under the assumpition that each subproblem returns a feasible solution with cost lower than the previous iteration: 

\begin{theorem}[Convergence of Algorithm I]
\label{theorem:1}
    Assume that Problem \ref{prob:entropy-search} and \ref{prob:metric-search} returns a feasible solution with value no greater than previous iteration, the cost of Problem \ref{prob:bilevel-cvx} is guaranteed to converge. 
\end{theorem}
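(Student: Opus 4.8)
The plan is to prove the statement by a standard monotone-sequence argument: I would show that the master cost of Problem~\ref{prob:bilevel-cvx} is weakly decreased at every half-step of Algorithm~\ref{alg:1} and is bounded below, so that the sequence of cost values converges by the monotone convergence theorem. The only real content is the bookkeeping that links the three objectives (the master cost and the two subproblem costs) so that the monotonicity chains together.

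First I would fix a single notation for the master objective, writing $c(\psi,\theta) = \sum_{i=1}^m \lvert \dot{x}_i - (\Pi(x_i)+K(x_i))(\partial(H+S)/\partial x)\rvert_1$, where $S$ depends on $\psi$ and $K$ on $\theta$. The key observation is that both subproblems minimize exactly this quantity, only over a restricted set of variables. In the metric search (Problem~\ref{prob:metric-search}) the entropy parameter $\bar\psi$ is frozen and the slack variables $\delta_{i,j}$ constitute an epigraph reformulation of the one-norm: at the optimum $\sum_{i,j}\delta_{i,j} = c(\bar\psi,\theta)$, because each component is pinned by the pair of inequalities $\delta_i \ge r_i$ and $-\delta_i \le r_i$ together with $\delta_{i,j}\ge 0$, so the slack sum collapses to the residual one-norm. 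The same identity holds verbatim for the entropy search (Problem~\ref{prob:entropy-search}) with $\bar\theta$ frozen. Hence each subproblem returns, at its optimum, the value of the master cost $c$ evaluated at the updated iterate, and the three objectives are literally comparable.

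Second I would chain the monotonicity. Let $c^{(t)}$ denote the master cost after the entropy step of iteration $t$. At iteration $t{+}1$ the metric step minimizes $c(\psi^{(t)},\cdot)$ with the previous metric $\theta^{(t)}$ feasible (the constraints $\eta I \succeq K(x)\succeq 0$ do not involve $\psi$); by the hypothesis that each subproblem returns a feasible solution of value no greater than the previous iterate, the result satisfies $c(\psi^{(t)},\theta^{(t+1)}) \le c(\psi^{(t)},\theta^{(t)}) = c^{(t)}$. The subsequent entropy step minimizes $c(\cdot,\theta^{(t+1)})$ and likewise yields $c^{(t+1)} = c(\psi^{(t+1)},\theta^{(t+1)}) \le c(\psi^{(t)},\theta^{(t+1)})$. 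Composing the two inequalities gives $c^{(t+1)} \le c^{(t)}$, so $\{c^{(t)}\}$ is nonincreasing. Finally, since $c(\psi,\theta)\ge 0$ as a sum of one-norms, the sequence is bounded below by $0$; a nonincreasing sequence bounded below converges, which is the claim.

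The main obstacle is not the limiting argument but verifying the two structural facts that make the monotone chain valid: (i) that the slack/LMI and slack/LP formulations are \emph{exact} reformulations of the same one-norm master cost, so the subproblem optima really are values of $c$; and (ii) that each iterate remains feasible for the next subproblem, which is what makes the ``value no greater than previous iteration'' hypothesis attainable rather than vacuous. Both are algebraic rather than analytic, so once they are established the theorem reduces immediately to the monotone convergence theorem. I would also remark that this argument gives convergence of the cost sequence only, not of the iterates $(\psi^{(t)},\theta^{(t)})$, consistent with the paper's caveat that Problem~\ref{prob:bilevel-cvx} is nonconvex and global optimality is not guaranteed.
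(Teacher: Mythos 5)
Your proposal is correct and follows essentially the same route as the paper: a monotone, bounded-below cost sequence converges. You add useful detail the paper omits (the exactness of the slack/epigraph reformulation and the feasibility of the previous iterate), and you correctly invoke the monotone convergence theorem where the paper somewhat loosely cites Bolzano--Weierstrass.
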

\begin{proof}
    Suppose cost function of Problem \ref{prob:bilevel-cvx} as $f(\theta, \psi)$, and $(\theta, \psi)$ is initialized at $(\theta^{(0)}, \psi^{(0)})$. By Algorithm \ref{alg:1}, we alternately minimize the cost with respect to $\theta$ and $\psi$. We use the superscript $^{(k)}$ to denote the variable after $k$-th update. Under the assumption that solving each subproblem results in a minimizer that is feasible and lower the cost from last iteration, we have the chain of inequality: $f(\theta^{(0)}, \psi^{(0)}) \ge f(\theta^{(1)}, \psi^{(0)}) \ge \cdots \ge  f(\theta^{(k)}, \psi^{(k-1)}) \ge f(\theta^{(k)}, \psi^{(k)}) \ge f(\theta^{(k+1)} \cdot, \psi^{(k)}) \cdots$.

    Denote that the smallest integer greater than or equal to $a$ as $\left\lceil a \right\rceil$. Then we have a monotone sequence of feasible solution $\{ (\theta^{( \left\lceil \frac{k}{2} \right\rceil )}, \psi^{(\left\lceil \frac{k-1}{2} \right\rceil )}) \}_{k \ge 0}$, with corresponding value of $f$ as $\{f_{k} \}_{k \ge 0}$. Thus we have that the sequence $\{f_k\}_{k \ge 0}$ monotonically decreases with trivial lower bound $0$, which  guarantees the existence of $f^*$ such that $\lim_{k\rightarrow\infty} f_k = f^*$ by Bolzano–Weierstrass theorem. 

\end{proof}




\section{Numerical Analysis}
In this section, we provide examples of the Hamiltonian system with the presented double bracket structure. We apply MOSEK to solve the LMI and LP in each subproblem in Algorithm \ref{alg:1}.

\subsection{Linear system}
We consider a 2-dimensional system with the momentum $p$, the configuration state $p$, and the Hamiltonian:
\begin{equation}
    H(q, p) = \frac{1}{2}p^2 + \frac{1}{2}q^2
\end{equation}
with the canonical Poission bracket:
\begin{equation}
    \Pi(p, q) = \begin{bmatrix}
        0 & 1 \\
        -1 & 0
    \end{bmatrix}.
\end{equation}

To illustrate the patterns of the free energy $E$, we consider the following polynomial function: 
\begin{equation}
\begin{aligned}
    E(p, q) &= -\frac{1}{2}(p^2 + 4p^4 - 4q^4)^2, \\
    S(p, q) &= E(p, q) - H.
\end{aligned}
\end{equation}
By construction, the zero level set of the free energy exhibit the $\infty$ shape that centered at the origin. 

We consider the symmetric bilinear map as defined by:
\begin{equation}
    {K} = \begin{bmatrix}
        1 & 0 \\
        0 & 1
    \end{bmatrix}. 
\end{equation}


\begin{figure}
    \centering
    \includegraphics[width=1\linewidth]{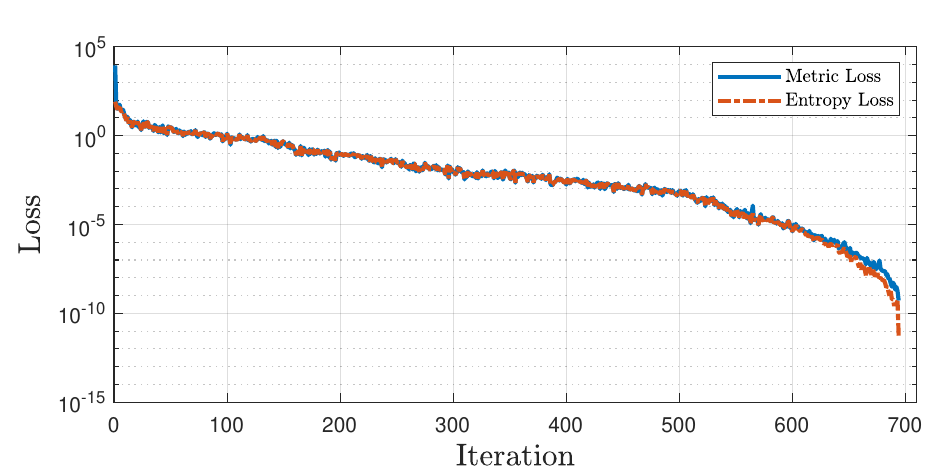}
    \caption{The convergence of the loss of Algorithm \ref{alg:2} for the 2D metriplectic system. The loss after solving each sub-problems are shown. Superlinear convergence rate is observed. As the loss converges to the trivial lower bound $0$, we conclude that the algorithm find the global optimizer in this case. }
    \label{fig:linear-cvx}
\end{figure}

The simulated trajectories in the phase space are shown in Fig. \ref{fig:linear-traj}. We show that the trajectory converges to the zero-level set of the free energy given the metriplectic structure.

We further apply the proposed system identification method to characterize the entropy and metric as the polynomial function given simulated trajectories. Due to the large number of measurements, we apply the stochastic bilevel convex optimization algorithm to avoid too many slack variables $\delta$. We consider to use quadratic polynomials to model the metric and $8$-th order polynomials to model the entropy. We show that the proposed method converges to the ground truth value. The evolution of the loss function is shown in Fig. \ref{fig:linear-cvx}. The total number of slack variables we introduce is 4000 with mini batch size 2000. 

\subsection{$\mathrm{SO}(3)$ system}

\begin{figure}
    \centering
    \includegraphics[width=0.65\linewidth]{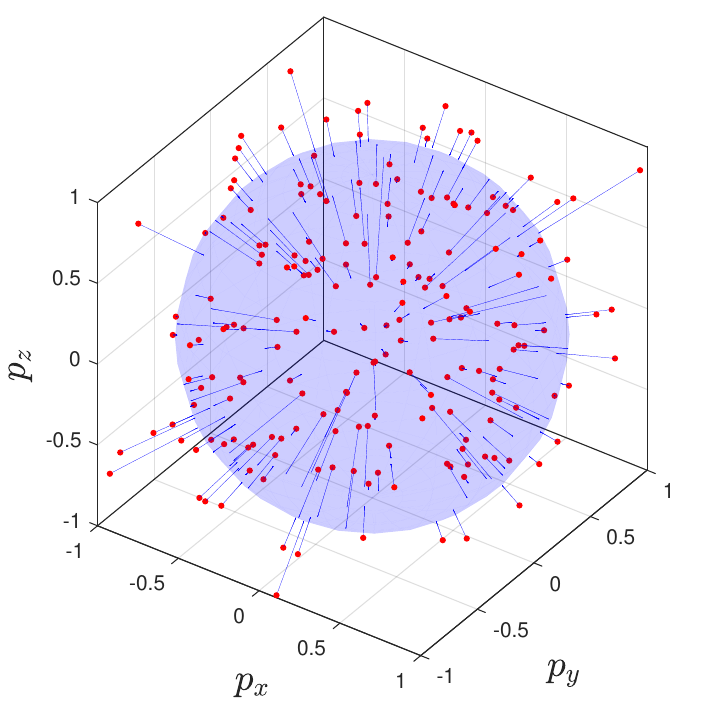}
    \caption{Trajectory of the metriplectic system on $\mathrm{SO}(3)$. The trajectories converges to the zero-level set of the free energy as a sphere. }
    \label{fig:SO3-traj}
\end{figure}

\begin{figure}
    \centering
    \includegraphics[width=1\linewidth]{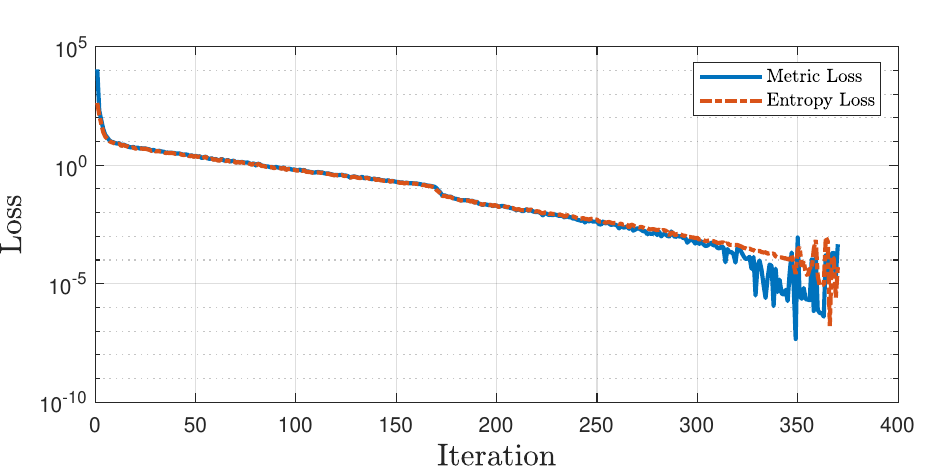}
    \caption{The convergence of the loss of Algorithm \ref{alg:2} for the metriplectic system on $\mathrm{SO}(3)$. The loss after solving each sub-problems are shown. The chattering at the last few iterations are due to the numerical issue of SDP solvers. }
    \label{fig:SO3-cvx}
\end{figure}

Now we apply the framework for a metriplectic system on $\mathrm{SO}(3)$. Consider the Lagrangian in the Lie algebra of $\mathrm{SO}(3)$, i.e, $L : \mathfrak{so}(3) \rightarrow \mathbb{R} $, and its corresponding Euler-Poincar\'e equations:
\begin{equation}
    \frac{d}{dt}\left( \frac{d L}{d \xi} \right) = \mathrm{ad}_{\xi}^* \frac{d L}{d \xi},
\end{equation}
where $\xi \in \mathfrak{so}(x)$. For a rigid body in 3D space, we have the Lagrangian
\begin{equation}
    L = \frac{1}{2}\xi^{\transpose}I\xi = \frac{1}{2}p^{\transpose}I^{-1}p,
\end{equation}
where the momentum is $p = I\xi$ and $I$ is the inertia matrix. 
Then we have the equation of motion as:
\begin{equation}
    \dot{p} = \Pi(p)\nabla H,
\end{equation}
with the cross product Poisson bracket and the Hamiltonian defined respectively by:
\begin{equation}
    \Pi(p) = p^{\times}, H = L, \nabla H = \xi.
\end{equation}
where $(\cdot)^{\times}$ satisfy $a^{\times}b = a\times b, \forall a, b \in \mathbb{R}^3$.
We consider the free energy to be:
\begin{equation}
    E(p) = - \frac{1}{2}(p_x^2 + p_y^2 + p_z^2 - 1)^2
\end{equation}
and the corresponding entropy becomes:
\begin{equation}
    S(p) = E(p) - H(q).
\end{equation}
The metric is chosen as the identify matrix:
\begin{equation}
    {K} = \begin{bmatrix}
        1 & 0 & 0 \\
        0 & 1 & 0 \\
        0 & 0 & 1 \\
    \end{bmatrix}. 
\end{equation}
The simulated trajectories of the system and the result of the system identification are presented in Fig. \ref{fig:SO3-traj} and Fig. \ref{fig:SO3-cvx}. We show that the sampled trajectories all converged to the zero level set of the free energy. We quadratic polynomial to model the metric and $6$-th order polynomial to model the entropy. The orders of the polynomials are both higher than the ground truth, which are sufficient to represent the true systems but also makes it more challenging for numerical optimization. The size of the mini batch is 2000, resulting in 6000 slack variables. The cost suggest that the average of each slack variable is negligible after 350 iterations in Algorithm 2. 

\section{Discussion and future work}
In this work, we have shown that the proposed bilevel convex optimization can successfully identify the underlying dynamics via solving an LMI and LP alternately. Though it is guaranteed that the total cost will converge assuming that each convex subproblem can solved by the convex optimization solver with feasible solutions to lower the entire cost, it is not guaranteed that the solution will uniquely converge to a minimizer. The non-uniqueness is due to the possible non-unique metriplectic representations. For a polynomial representation of the metric and entropy with a fixed order, multiple solutions may yield identical polynomial vector fields. In future work, it is critical to analyze the regularity conditions for the uniqueness and identifiability of the metriplectic systems. As we leverage the off-the-shelf solver for each subproblem, we are not able to claim any convergence for the sequence of the feasible solutions shown in Theorem~\ref{theorem:1} without details of the numerical solvers. Future work will investigate the coordinate descent \cite{wright2015coordinate} to leverage the component-wise convexity. 

As the robotics systems can be naturally modeled by rigid body motions that is a Hamiltonian system evolving on matrix Lie groups \cite{ghaffari2022progress}, considering the dissipative effects induced by the generalized formalism in control~\cite{jang2023convex, Teng-RSS-23, teng2022lie, teng2022error} and realtime state estimation~\cite{yu2023fully, teng2021legged, teng2022input, he2024legged} on Lie groups could be interesting future work.

\section{Conclusion}
\label{sec:conclusion}
This paper generalizes the classical metriplectic formalism to Hamiltonian systems with nonconservative dissipation. Via relaxation the conditions for entropy and metric, we induces the \emph{free energy} function as the summation of entropy and the total mechanical energy of the system. We show that under the double bracket condition, the free energy changes monotonically. We further provided a bilevel convex optimization approach for system identification of the metriplectic system. The proposed system identification approach is demonstrated in numerical experiments to recover the underlying entropy and brackets given the observation of the systems. 

\section*{Acknowledgments}
This work was supported by AFOSR MURI FA9550-23-1-0400. The authors thank Dr. Frederick A. Leve for his encouragement and support. Additional support from AFOSR FA9440-31-1-0215 and NSF DMS 2103026.

{\footnotesize 
\balance
\bibliographystyle{IEEEtran}
\bibliography{bib/strings-abrv,bib/ieee-abrv,bib/references}

\begin{thebibliography}{10}
\providecommand{\url}[1]{#1}
\csname url@rmstyle\endcsname
\providecommand{\newblock}{\relax}
\providecommand{\bibinfo}[2]{#2}
\providecommand\BIBentrySTDinterwordspacing{\spaceskip=0pt\relax}
\providecommand\BIBentryALTinterwordstretchfactor{4}
\providecommand\BIBentryALTinterwordspacing{\spaceskip=\fontdimen2\font plus
\BIBentryALTinterwordstretchfactor\fontdimen3\font minus \fontdimen4\font\relax}
\providecommand\BIBforeignlanguage[2]{{%
\expandafter\ifx\csname l@#1\endcsname\relax
\typeout{** WARNING: IEEEtran.bst: No hyphenation pattern has been}%
\typeout{** loaded for the language `#1'. Using the pattern for}%
\typeout{** the default language instead.}%
\else
\language=\csname l@#1\endcsname
\fi
#2}}

\bibitem{o2022neural}
M.~O’Connell, G.~Shi, X.~Shi, K.~Azizzadenesheli, A.~Anandkumar, Y.~Yue, and S.-J. Chung, ``Neural-fly enables rapid learning for agile flight in strong winds,'' \emph{Science Robotics}, vol.~7, no.~66, p. eabm6597, 2022.

\bibitem{song2023uncertainty}
J.~Song, O.~Bagoren, and K.~A. Skinner, ``Uncertainty-aware acoustic localization and mapping for underwater robots,'' in \emph{OCEANS 2023-Limerick}.\hskip 1em plus 0.5em minus 0.4em\relax IEEE, 2023, pp. 1--9.

\bibitem{song2024turtlmap}
J.~Song, O.~Bagoren, R.~Andigani, A.~V. Sethuraman, and K.~Skinner, ``Turtlmap: Real-time localization and dense mapping of low-texture underwater environments with a low-cost unmanned underwater vehicle,'' \emph{arXiv preprint arXiv:2408.01569}, 2024.

\bibitem{souriau1970dynamiques}
J.-M. Souriau, \emph{Structure des Systèmes Dynamiques}.\hskip 1em plus 0.5em minus 0.4em\relax Dunod, Paris, 1970.

\bibitem{AbrahamMarsden}
R.~Abraham and J.~E. Marsden, \emph{Foundations of Mechanic}.\hskip 1em plus 0.5em minus 0.4em\relax Addison-Wesley, 1978.

\bibitem{BROCKETT199179}
R.~Brockett, ``Dynamical systems that sort lists, diagonalize matrices, and solve linear programming problems,'' \emph{Linear Algebra and its Applications}, vol. 146, pp. 79--91, 1991.

\bibitem{bloch1990steepest}
A.~M. Bloch, ``Steepest descent, linear programming and hamiltonian flows,'' \emph{Contemp. Math. AMS}, vol. 114, pp. 77--88, 1990.

\bibitem{bloch1992completely}
A.~M. Bloch, R.~W. Brockett, and T.~S. Ratiu, ``Completely integrable gradient flows,'' \emph{Communications in Mathematical Physics}, vol. 147, no.~1, pp. 57--74, 1992.

\bibitem{BLOCH199437}
A.~Bloch, P.~Krishnaprasad, J.~Marsden, and T.~Ratiu, ``Dissipation induced instabilities,'' \emph{Annales de l'Institut Henri Poincaré C, Analyse non linéaire}, vol.~11, no.~1, pp. 37--90, 1994.

\bibitem{Bloch1996}
------, ``The euler-poincaré equations and double bracket dissipation,'' \emph{Communications in Mathematical Physics}, vol. 175, no.~1, 1996.

\bibitem{Otto2001}
F.~Otto, ``The geometry of dissipative evolution equations: The porous medium equation,'' \emph{Communications in Partial Differential Equations}, vol.~26, no. 1-2, pp. 101--174, 2001.

\bibitem{Ambrosio2008}
L.~Ambrosio, N.~Gigli, and G.~Savar\'e, \emph{Gradient Flows in Metric Spaces and in the Space of Probability Measures}, 2nd~ed.\hskip 1em plus 0.5em minus 0.4em\relax Basel: Birkh\"auser, 2008.

\bibitem{Villani2009}
C.~Villani, \emph{Optimal Transport: Old and New}.\hskip 1em plus 0.5em minus 0.4em\relax Berlin, Heidelberg: Springer, 2009.

\bibitem{Jungel2016}
A.~Jüngel, \emph{Entropy Methods for Diffusive Partial Differential Equations}.\hskip 1em plus 0.5em minus 0.4em\relax Cham: Springer, 2016.

\bibitem{MORRISON1986410}
P.~J. Morrison, ``A paradigm for joined hamiltonian and dissipative systems,'' \emph{Physica D: Nonlinear Phenomena}, vol.~18, no.~1, pp. 410--419, 1986.

\bibitem{materassi2018metriplectic}
M.~Materassi and P.~J. Morrison, ``Metriplectic torque for rotation control of a rigid body,'' \emph{arXiv preprint arXiv:1807.01168}, 2018.

\bibitem{bloch2024metriplecticeulerpoincare}
A.~Bloch, M.~F. Puiggalí, and D.~M. de~Diego, ``Metriplectic euler-poincar\'e equations: smooth and discrete dynamics,'' 2024.

\bibitem{Birtea_2007}
P.~Birtea, M.~Boleantu, M.~Puta, and R.~M. Tudoran, ``Asymptotic stability for a class of metriplectic systems,'' \emph{Journal of Mathematical Physics}, vol.~48, no.~8, Aug. 2007.

\bibitem{HUDON201312}
N.~Hudon, D.~Dochain, and M.~Guay, ``Feedback stabilization of metriplectic systems,'' \emph{IFAC Proceedings Volumes}, vol.~46, no.~14, pp. 12--17, 2013, 1st IFAC Workshop on Thermodynamic Foundations of Mathematical Systems Theory.

\bibitem{GUAY201266}
M.~Guay, N.~Hudon, and K.~Höffner, ``Representation and control of brayton—moser systems using a geometric decomposition,'' \emph{IFAC Proceedings Volumes}, vol.~45, no.~19, pp. 66--71, 2012, 4th IFAC Workshop on Lagrangian and Hamiltonian Methods for Non Linear Control.

\bibitem{GENERIC1997GrmelaOttinger}
M.~Grmela and H.~C. \"Ottinger, ``Dynamics and thermodynamics of complex fluids. i. development of a general formalism,'' \emph{Phys. Rev. E}, vol.~56, pp. 6620--6632, Dec 1997.

\bibitem{KAUFMAN1984419}
``Dissipative hamiltonian systems: A unifying principle,'' \emph{Physics Letters A}, vol. 100, no.~8, pp. 419--422, 1984.

\bibitem{morrison1984brackets}
P.~J. Morrison, ``Some observations regarding brackets and dissipation,'' Center for Pure and Applied Mathematics, University of California, Berkeley, Tech. Rep. PAM-228, 1984.

\bibitem{Materassi_2012}
M.~Materassi and E.~Tassi, ``Metriplectic framework for dissipative magneto-hydrodynamics,'' \emph{Physica D: Nonlinear Phenomena}, vol. 241, no.~6, p. 729–734, Mar. 2012.

\bibitem{Suzuki_2020}
Y.~Suzuki, ``A generic formalism for korteweg-type fluids: I. a comparison with classical theory,'' \emph{Fluid Dynamics Research}, vol.~52, no.~1, p. 015516, mar 2020.

\bibitem{WAGNER2001177}
N.~J. Wagner, ``The smoluchowski equation for colloidal suspensions developed and analyzed through the generic formalism,'' \emph{Journal of Non-Newtonian Fluid Mechanics}, vol.~96, no.~1, pp. 177--201, 2001.

\bibitem{Duong_2013}
M.~H. Duong, M.~A. Peletier, and J.~Zimmer, ``Generic formalism of a vlasov–fokker–planck equation and connection to large-deviation principles,'' \emph{Nonlinearity}, vol.~26, no.~11, p. 2951, oct 2013.

\bibitem{bloch2013gradient}
A.~M. Bloch, P.~J. Morrison, and T.~S. Ratiu, ``Gradient flows in the normal and kähler metrics and triple bracket generated metriplectic systems,'' in \emph{Recent Trends in Dynamical Systems}, ser. Springer Proceedings in Mathematics and Statistics.\hskip 1em plus 0.5em minus 0.4em\relax Springer, Basel, 2013, vol.~35, pp. 371--415.

\bibitem{curvature4bracket2024Morrison}
P.~J. Morrison and M.~H. Updike, ``Inclusive curvaturelike framework for describing dissipation: Metriplectic 4-bracket dynamics,'' \emph{Phys. Rev. E}, vol. 109, p. 045202, Apr 2024.

\bibitem{Ottinger2018}
H.~C. \"Ottinger, ``Generic integrators: Structure preserving time integration for thermodynamic systems,'' \emph{Journal of Non-Equilibrium Thermodynamics}, vol.~43, no.~2, pp. 89--100, 2018.

\bibitem{shang2020structure}
X.~Shang and H.~C. Öttinger, ``Structure-preserving integrators for dissipative systems based on reversible-irreversible splitting,'' \emph{Proceedings of the Royal Society A: Mathematical, Physical and Engineering Sciences}, vol. 476, no. 2234, p. 20190446, 2020.

\bibitem{energymomentymnumerics2020}
P.~Betsch and M.~Schiebl, ``Energy-momentum-entropy consistent numerical methods for large-strain thermoelasticity relying on the generic formalism,'' \emph{International Journal for Numerical Methods in Engineering}, vol. 119, no.~12, pp. 1216--1244, 2019.

\bibitem{NEURIPS2021_2d1bcedd}
K.~Lee, N.~Trask, and P.~Stinis, ``Machine learning structure preserving brackets for forecasting irreversible processes,'' in \emph{Advances in Neural Information Processing Systems}, M.~Ranzato, A.~Beygelzimer, Y.~Dauphin, P.~Liang, and J.~W. Vaughan, Eds., vol.~34.\hskip 1em plus 0.5em minus 0.4em\relax Curran Associates, Inc., 2021, pp. 5696--5707.

\bibitem{Zhang_2022}
Z.~Zhang, Y.~Shin, and G.~Em~Karniadakis, ``Gfinns: Generic formalism informed neural networks for deterministic and stochastic dynamical systems,'' \emph{Philosophical Transactions of the Royal Society A: Mathematical, Physical and Engineering Sciences}, vol. 380, no. 2229, June 2022.

\bibitem{GRUBER2023115709}
A.~Gruber, M.~Gunzburger, L.~Ju, and Z.~Wang, ``Energetically consistent model reduction for metriplectic systems,'' \emph{Computer Methods in Applied Mechanics and Engineering}, vol. 404, p. 115709, 2023.

\bibitem{manchester2017control}
I.~R. Manchester and J.-J.~E. Slotine, ``Control contraction metrics: Convex and intrinsic criteria for nonlinear feedback design,'' \emph{IEEE Transactions on Automatic Control}, vol.~62, no.~6, pp. 3046--3053, 2017.

\bibitem{wright2015coordinate}
S.~J. Wright, ``Coordinate descent algorithms,'' \emph{Mathematical programming}, vol. 151, no.~1, pp. 3--34, 2015.

\bibitem{ghaffari2022progress}
M.~Ghaffari, R.~Zhang, M.~Zhu, C.~E. Lin, T.-Y. Lin, S.~Teng, T.~Li, T.~Liu, and J.~Song, ``Progress in symmetry preserving robot perception and control through geometry and learning,'' \emph{Frontiers in Robotics and AI}, vol.~9, p. 232, 2022.

\bibitem{jang2023convex}
J.~Jang, S.~Teng, and M.~Ghaffari, ``Convex geometric trajectory tracking using {Lie} algebraic {MPC} for autonomous marine vehicles,'' \emph{IEEE Robotics and Automation Letters}, vol.~8, no.~12, pp. 8374--8381, 2023.

\bibitem{Teng-RSS-23}
S.~Teng, A.~Jasour, R.~Vasudevan, and M.~G. Jadidi, ``{Convex Geometric Motion Planning on {Lie} Groups via Moment Relaxation},'' in \emph{Proc. Robot.: Sci. Syst. Conf.}, Daegu, Republic of Korea, July 2023.

\bibitem{teng2022lie}
S.~Teng, W.~Clark, A.~Bloch, R.~Vasudevan, and M.~Ghaffari, ``{Lie} algebraic cost function design for control on {Lie} groups,'' in \emph{Proc. {IEEE} Conf. Decision Control}.\hskip 1em plus 0.5em minus 0.4em\relax IEEE, 2022, pp. 1867--1874.

\bibitem{teng2022error}
S.~Teng, D.~Chen, W.~Clark, and M.~Ghaffari, ``An error-state model predictive control on connected matrix {Lie} groups for legged robot control,'' in \emph{Proc. {IEEE}/{RSJ} Int. Conf. Intell. Robots and Syst.}\hskip 1em plus 0.5em minus 0.4em\relax IEEE, 2022, pp. 8850--8857.

\bibitem{yu2023fully}
X.~Yu, S.~Teng, T.~Chakhachiro, W.~Tong, T.~Li, T.-Y. Lin, S.~Koehler, M.~Ahumada, J.~M. Walls, and M.~Ghaffari, ``Fully proprioceptive slip-velocity-aware state estimation for mobile robots via invariant kalman filtering and disturbance observer,'' in \emph{2023 IEEE/RSJ International Conference on Intelligent Robots and Systems (IROS)}.\hskip 1em plus 0.5em minus 0.4em\relax IEEE, 2023, pp. 8096--8103.

\bibitem{teng2021legged}
S.~Teng, M.~W. Mueller, and K.~Sreenath, ``Legged robot state estimation in slippery environments using invariant extended kalman filter with velocity update,'' in \emph{2021 IEEE International Conference on Robotics and Automation (ICRA)}.\hskip 1em plus 0.5em minus 0.4em\relax IEEE, 2021, pp. 3104--3110.

\bibitem{teng2022input}
S.~Teng, A.~K. Sanyal, R.~Vasudevan, A.~Bloch, and M.~Ghaffari, ``Input influence matrix design for mimo discrete-time ultra-local model,'' in \emph{2022 American Control Conference (ACC)}.\hskip 1em plus 0.5em minus 0.4em\relax IEEE, 2022, pp. 2730--2735.

\bibitem{he2024legged}
Z.~He, S.~Teng, T.-Y. Lin, M.~Ghaffari, and Y.~Gu, ``Legged robot state estimation within non-inertial environments,'' \emph{arXiv preprint arXiv:2403.16252}, 2024.

\end{thebibliography}
}

\end{document}